\documentclass[12pt]{iopart}

\usepackage{iopams}
\usepackage[USenglish]{babel} 
\usepackage{dsfont}
\usepackage{rotating}
\usepackage{xcolor}
\usepackage{amsmath}
\usepackage{amssymb}
\usepackage{amsthm}
\usepackage[urlcolor=blue,colorlinks=true,linkcolor=blue,pdfstartview={FitH},bookmarks=false]{hyperref} 
\usepackage[square, sort, sort&compress, numbers, mcite]{natbib}
\usepackage{mciteplus} 
%

\newcommand{\ket}[1]{\ensuremath{|#1\rangle}}

\newcommand{\braket}[2]{\ensuremath{\langle#1|#2\rangle}}


\newcommand{\eg}{\emph{e.g.}}
\newcommand{\ie}{\emph{i.e.}}
\newcommand{\etal}{\emph{et al}}


\newcommand{\p}{\scriptscriptstyle{+}}
\newcommand{\m}{\scriptscriptstyle{-}}

\newcommand{\dg}{\dagger}

\newcommand{\mc}{\mathcal}

\newcommand{\txt}[1]{\text{#1}}

\newtheorem{thh}{Theorem}
\newtheorem{deff}{Definition}

\newtheorem{lem}{Lemma}

\mciteSetMidEndSepPunct{\newline\hspace*{0.0cm}}{.}{\relax}


\begin{document}

\title{New symmetry in the Rabi model}

\author{Bart{\l}omiej Gardas}
\address{Institute of Physics, University of Silesia, PL-40-007 Katowice, Poland}
\ead{bartek.gardas@gmail.com}

\author{Jerzy Dajka}
\address{Institute of Physics, University of Silesia, PL-40-007 Katowice, Poland}
\ead{jerzy.dajka@us.edu.pl}

\begin{abstract}
It is recognised that, apart from the total energy conservation, there is a nonlocal 
$\mathbb{Z}_2$ and a somewhat hidden symmetry in this model. Conditions for the existence
of this observable, its form, and its explicit construction are presented.
\end{abstract}

\pacs{03.65.Yz, 03.67.-a}
\maketitle

\section{Introduction}
\label{one}
A symmetry can be seen as an equivalence of different physical situations~\cite{peres}. Such an equivalence 
in quantum theory entails the invariance of a certain set of observables and can be formalised in terms of 
commutation relations between these operators and a given Hamiltonian. The existence of good quantum numbers,
also those having no classical counterpart, is a direct consequence of symmetries. It extends the amount of
information accessible for researchers studying quantum systems.

It is possible to explain an unusual system's behaviour, its properties and dynamics by means of symmetries.
Selection rules or Kramers degeneracy~\cite{kd} may serve as a good example here. Symmetries not only deepen
our understanding of quantum systems but also can be included to engineer their physical realisation more
effectively~\cite{fotorabi}. In general, the more symmetries recognised (together with related conserved
quantities) the more different approaches to study the system's dynamics are at our disposal.
 
Only in extreme cases, one can meet analytically solvable models (such as harmonic oscillator, Jaynes--Cummings
model or hydrogen atom) where symmetries can be found easily. In this paper, we consider a quantum model consisting
of a two--level system (qubit) interacting with a single mode bosonic field (electromagnetic radiation) with 
frequency $\omega$. The Hamiltonian of that system is assumed to be of the following form
\begin{equation}
\label{rabi}
{\bf{H}}=\beta\sigma_z+\Delta\sigma_x + \omega a^{\dagger}a+\sigma_z\otimes\left(g^*a+ga^{\dagger}\right),
\end{equation}
where $a$ and $a^{\dagger}$ are the creation and annihilation operators of the bosonic field. Mathematically, this means that
$[a,a^{\dagger}]=\mathbb{I}$. For an experimental characterisation of these operators see~\cite{kumar}. $\sigma_z$ and 
$\sigma_x$ denote the two Pauli spin matrices. The term $\beta\sigma_z$ stands for the unperturbed energy of the
qubit with possible eigenenergies $\pm\beta$. Tunnelling between the corresponding energy levels in the absence of the
bosonic field (spontaneous transition) is described by $\Delta\sigma_x$. Finally, the coupling constant $g$ reflects 
the strength of the interaction between the systems. 

The above Hamiltonian is the well--known Rabi model~\cite{rabiorg1,*rabiorg2}--probably the most influential
model describing fully quantized interaction between matter and light. Although the model originates from
quantum optics~\cite{vedral}, its applications range from molecular physics~\cite{molec}, solid state (see
Refs. in~\cite{irish}) to the recent experiments involving cavity and circuit QED~\cite{qed1,*qed2}. The Rabi 
model can be implemented by means of rich variety of different setups such as Josephson junctions~\cite{jj},
trapped ions~\cite{tj}, superconductors~\cite{super} or semiconductors~\cite{semi}, to name a few.

Despite its simplicity, the Hamiltonian of the Rabi model cannot be diagonalized exactly when $\Delta\not=0$. 
Although some progress has been reported recently~\cite{braak,*zieg}, exact analytical formulas for the eigenvalues 
and corresponding eigenfunction of the Hamiltonian~(\ref{rabi}) are still missing. There is a wide spectrum of 
available approximation techniques including rotating wave approximation~\cite{vedral} (leading to the famous
Jaynes--Cummings model~\cite{jc}) which allow the eigenproblem to be approached from many different directions.
At this point, a question concerning the existence of symmetries in the Rabi model (together with related constants
of motion) arises naturally. 

Provided that $\beta=0$, the Hamiltonian~(\ref{rabi}) remains unchanged when $\sigma_z\rightarrow-\sigma_z$ and 
$a\rightarrow-a$ (hence $a^{\dagger}\rightarrow-a^{\dagger}$). The symmetry operator $\bf{J}_0$ that generates 
this transformation (\eg{} fulfills $[{\bf{H}},{\bf{J}}_0]=0$) reads ${\bf{J}_0}=\sigma_x\otimes\txt{P}$, where
$\txt{P}=\exp(\txt{i}\pi a^{\dagger}a)$ is the bosonic parity~\cite{gardas4}. This is the well--known result:
still being unsolvable, the Rabi model possesses a discrete symmetry if $\beta=0$. 

When $\beta\not=0$, on the other hand, we can still leave $\bf{H}$ unaffected after changing $\sigma_z\rightarrow-\sigma_z$, 
$a\rightarrow-a$ if we change the sign of $\beta$ as well (\ie{} $\beta\rightarrow-\beta$). This instantly raises
a question: What does the corresponding generator of such transformation, $\bf{J}$, look like? Unfortunately, this
question has not been answered so far. Moreover, it was quite recently conjectured~\cite{braak} that the Rabi model does 
not possess any symmetry at all (except the trivial one related to the total energy conservation) as long as $\beta\not=0$. 
If that were true, the only self--adjoint operator $\bf{J}$ such that ${[\bf{H},\bf{J}]}=0$ would be the Hamiltonian 
$\bf{H}$ itself. 

On the basis of the results reported here, we prove that this conjecture is false. In particular, we show how one can 
find a self--adjoint involution $\bf{J}$, that is $\bf{J}^2=\mathbb{I}_{\txt{B}}$, such that $\bf{HJ}=\bf{JH}$. Also, 
we discuss the possibility of the exact diagonalization of the Rabi Hamiltonian~(\ref{rabi}).

It is worth mentioning that symmetry groups of the time evolution generator (the Hamiltonian $\bf{H}$ in our case) are larger 
than those of the corresponding equation of motion (Schr\"{o}dinger equation: $\ket{\dot{\Psi}_t}={\bf{H}}\ket{\Psi_t}$). 
In particular, we could consider the existence of a symmetry ${\bf{J}}_t$ which does not necessarily commute with $\bf{H}$ but
still assures the same time evolution for two different states: $\ket{\Psi_t}$, ${\bf{J}}_t\ket{\Psi_t}$. Of course, this
is possible if $i{\dot{\bf{J}}_t}=[{\bf{H}},{\bf{J}}_t]$. The idea of such dynamical symmetries is interesting by itself, 
yet it is beyond the scope of our current considerations and won't be pursued any further in this work.

\section{Main result}
\label{two}
Let us begin with formal rewriting of the Rabi Hamiltonian~(\ref{rabi}) as a matrix with operator entries:
\begin{equation}
 \label{bom}
 {\bf{H}} = 
  \begin{bmatrix}
  \txt{H}_{\p} & \Delta \\
   \Delta & \txt{H}_{\m} 
 \end{bmatrix},
 \quad\txt{where}\quad
 \txt{H}_{\pm}:=\omega a^{\dagger}a\pm\left(g^*a+ga^{\dagger}\right)\pm\beta.
\end{equation}
Customarily, the parameters $\Delta$ and $\beta$ denote $\Delta\mathbb{I}_{\txt{B}}$ and $\beta\mathbb{I}_{\txt{B}}$ 
respectively. $\mathbb{I}_{\txt{B}}$ stands for the identity on the bosonic Hilbert space $\mathcal{H}_{\txt{B}}$. 

The matrix representation of the Rabi model given in~(\ref{bom}) is established via a natural isomorphism 
$\mathbb{C}^2\otimes\mathcal{H}_{\txt{B}}\sim\mathcal{H}_{\txt{B}}\oplus\mathcal{H}_{\txt{B}}$. Usually, such
an identification is invoked in order to simplify purely algebraic calculations (see \eg{} ~\cite{simple}). 
This is not the reason why we use this idea here. Instead, we are going to attack the problem in question by 
using a concept of block operator matrix~\cite{bom} in conjunction with its relation to an operator Riccati 
equation~\cite{ricc_book,*gardas,*gardas2,*gardasNote}. 

First however, we would like to clarify some technical aspects concerning the Rabi matrix~(\ref{bom}) (\eg{}
its domain $\mathcal{D}(\bf{H})$). One should mention that this is not a primary issue in many papers addressing 
physical aspects of the Rabi model. Needless to say, one cannot take the advantage of very powerful
existential mathematical theorems (\eg{} the famous Banach fixed point theorem~\cite{simon}) in such cases 
simply because it is not known whether the premises of these statements are met.

In a first step toward constructing $\bf{J}$, we define domains $\mathcal{D}_{\pm}:=\mathcal{D}(\txt{H}_{\pm})$
on which both operators $\txt{H}_{\pm}$ are self--adjoint. Since the off--diagonal elements of $\bf{H}$ are bounded, 
we have $\bf{H}^{*}=\bf{H}$ on $\mathcal{D}({\bf{H}})=\mathcal{D}_{\p}\oplus\mathcal{D}_{\m}$. As both $a$ and 
$a^{\dagger}$ are unbounded, the canonical commutation relation holds only on some (dense) subspace 
$\mc{D}_2$ of $\mc{H}_{\text{B}}$. Let us assume that $\mc{D}_1$ is a dense set on which $a$ and $a^{\dagger}$ 
are adjoint to each other \ie, $(a^{\dagger})^{*}=a$ and $a^{*}=a^{\dagger}$. At this point, it is not obvious
that the subspaces having the desired properties exist at all. An interested reader can find the detailed construction
of $\mc{D}_i$ \eg{} in~\cite{berezin,*br,*sz0}. Here, we briefly summarise what was covered therein. We have

\begin{equation}
\label{dom}
 \mc{D}_i = \left\{\sum_{n=0}^{\infty}\xi_n\ket{n}\in\mathcal{H}_{\txt{B}}:\quad\sum_{n=0}^{\infty}n^i|\xi_n|^2<\infty\right\},
 \quad i =1,2,
\end{equation}
where $\{\ket{n}\}_{n=0}^{\infty}$ is the canonical (orthonormal) basis in $l_2$ ($\sim\mc{H}_{\txt{B}}$). 
Considering the fact that $a$, $a^{\dagger}$ and $a^{\dagger}a$ ought to produce normalizable states, the 
above definitions seem natural. Having~(\ref{dom}) in place, we define 

\begin{equation}
 \label{aa}
   a\ket{\psi} := \sum_{n=1}^{\infty}\sqrt{n}\xi_n\ket{n-1}, \quad
   a^{\dagger}\ket{\psi} := \sum_{n=0}^{\infty}\sqrt{n+1}\xi_n\ket{n+1},
   \quad\ket{\psi}\in\mc{D}_1.
  \end{equation}
It follows immediately from~(\ref{aa}) that $a^{\dagger}\ket{n}=\sqrt{n+1}\ket{n+1}$ and $a\ket{n}=\sqrt{n}\ket{n-1}$.
Interestingly, the latter relations serve as the very definition of the creation and annihilation operators in most 
textbooks on quantum mechanics. A definition like this may be well motivated physically, yet it has at least one 
serious mathematical drawback. Namely, it introduces closeable operators which are not closed. This leads to a 
variety of technical difficulties typical for such classes of operators. One can avoid them by taking the 
closures~(\ref{aa}) as proper definitions, instead. 

A basic result from operator theory (see \eg, Theorem 4.2.7 in~\cite{blank}) states that if $\txt{A}$ is closed on 
$\mathcal{D}(\txt{A})$ then $\txt{A}^*\txt{A}$ is positive, self-adjoint and its domain is a core of $\txt{A}$ (\ie{} 
$\txt{A}$ is the closure of its restriction $\txt{A}_{|\mathcal{D}(\txt{A}^*\txt{A})}$). On $\mathcal{D}_2$, the operators 
$\txt{H}_{\pm}$ can be written as 
\begin{equation}
\label{compact}
\txt{H}_{\pm}=\omega\left(a\pm\frac{g}{\omega}\right)^{\dg}\left(a\pm\frac{g}{\omega}\right)\pm\beta-\frac{|g|^2}{\omega},
\end{equation}
and as a result, they are both self--adjoint and their common domain $\mathcal{D}_2$ is a core of both $a$ and
$a^{\dagger}$. In conclusion, the Rabi Hamiltonian~(\ref{bom}) is well defined and self--adjoint on 
$\mathcal{D({\bf{H}})}=\mathcal{D}_2\oplus\mathcal{D}_2$. 

After discussing technical nuances concerning the Rabi model, we introduce a quadratic second order operator equation,
known as the Riccati equation, which has the following form
\begin{equation}
\label{ricc}
 \Delta\txt{X}^2+\txt{X}\txt{H}_{\p}-\txt{H}_{\m}\txt{X}-\Delta=0.
\end{equation}
Many of the relevant problems related to the Rabi model~(\ref{rabi}), including its exact diagonalization, can be reduced 
to the mathematical questions concerning solvability of this equation.

There is more than one notion of a solution when equations with operator coefficients are involved. In Hilbert spaces, one 
can define a solution in terms of the scalar product (weak solution). On the other hand, one may require for operators to 
be equal when they produce the same results while acting on the same states. These kind of solutions, which are of great 
importance in quantum mechanics, are known as strong ones. Let us briefly clarify these two notions for the Riccati equation
in question. 

\begin{deff}
A bounded operator $\txt{X}_0$ acting on a Hilbert space $\mathcal{H}$ is called a weak solution of the Riccati 
equation~(\ref{ricc}) if
\begin{equation}
\label{weak}
\Delta\langle\txt{X}_0^2\phi,\psi\rangle + \langle\txt{X}_0\txt{H}_{\p}\phi,\psi\rangle
                                         -\langle\txt{X}_0\phi,\txt{H}_{\m}\psi\rangle-\Delta\langle\phi,\psi\rangle=0,
\quad\txt{for}\quad                           
\ket{\psi}, \ket{\phi}\in\mathcal{D}_2.              
\end{equation}
A bounded operator $\txt{X}_0$ acting on $\mathcal{H}$ such that $Ran(\txt{X}_{0|\mathcal{D}_2})\subset\mathcal{D}_2$
and
\begin{equation}
\label{strong}
\Delta\txt{X}_0^2\ket{\psi}+\txt{X}_0\txt{H}_{\p}\ket{\psi}-\txt{H}_{\m}\txt{X}_0\ket{\psi}-\Delta\ket{\psi}=0,
 \quad\txt{for}\quad\ket{\psi}\in\mathcal{D}_2,
\end{equation}
is a strong solution of~(\ref{ricc}). 
\end{deff}
Of course, a strong solution is also a weak solution. It is often easier to prove the existence of a weak rather than
a strong solution. However, strong solutions, especially in quantum mechanics, are those which we are interested in.
Fortunately, the two notions are in fact equivalent~\cite{strong}. Nevertheless, there is no general method of finding 
either weak or strong solutions to a particular Riccati equation. For this reason, the following theorem, which provides 
criteria of solvability, is of great importance to us.
\begin{lem}
\label{mt}
Let $\txt{H}_{\pm}$ be (possibly unbounded) self--adjoint operators acting on domains $\mathcal{D}(\txt{H}_{\pm})$
in a separable Hilbert space $\mathcal{H}$. Let us also assume that $\txt{V}_1\not=0$ and $\txt{V}_2$ are bounded 
operators on $\mathcal{H}$. If the spectra $\sigma(\txt{H}_{\pm})$ are disjoint, i.e., 
\begin{equation}
\label{d}
 d:=\txt{dist}\left(\sigma(\txt{H}_{\p}),\sigma(\txt{H}_{\m})\right)>0,
\end{equation}
and if $\txt{V}_1$, $\txt{V}_2$ satisfy the `smallness assumption'
\begin{equation}
\label{small}
 \sqrt{\|\txt{V}_1\|\|\txt{V}_2\|}<\frac{d}{\pi},
\end{equation}
then the Riccati equation
\begin{equation}
 \txt{X}\txt{V}_1\txt{X}+\txt{X}\txt{H}_{\p}-\txt{H}_{\m}\txt{X}-\txt{V}_2=0,
\end{equation}
has a unique weak solution $\txt{X}_0$ in the ball
\begin{equation}
\left\{\txt{X}\in\mathcal{B}(\mathcal{H}): \|\txt{X}\|<\frac{d}{\pi\|\txt{V}_1\|} \right\}.
\end{equation}
satisfying an estimate
\begin{equation}
\|\txt{X}_0\|\leq\frac{1}{\|V_2\|}\left(\frac{d}{\pi}-\sqrt{\frac{d^2}{\pi^2}-\|\txt{V}_1\|\|\txt{V}_2}\|\right).
\end{equation}
In particular, if
\begin{equation}
\label{cont}
\|\txt{V}_1\|+\|\txt{V}_2\|<\frac{2d}{\pi},
\end{equation}
then $\txt{X}_0$ is a strict contraction, that is, $\|X_0\|<1$.
\end{lem}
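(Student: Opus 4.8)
The plan is to recast the quadratic equation as a fixed-point problem governed by its linear part and then to apply the Banach contraction principle. Rearranging, a bounded $\txt{X}$ solves the Riccati equation exactly when $\txt{H}_{\m}\txt{X}-\txt{X}\txt{H}_{\p}=\txt{X}\txt{V}_1\txt{X}-\txt{V}_2$. I therefore introduce the Sylvester operator $\mc{S}(\txt{X}):=\txt{H}_{\m}\txt{X}-\txt{X}\txt{H}_{\p}$, read weakly on $\mc{D}_2$ through $\langle\txt{X}\txt{H}_{\p}\phi,\psi\rangle-\langle\txt{X}\phi,\txt{H}_{\m}\psi\rangle$ as in~(\ref{weak}), so that the unbounded $\txt{H}_{\pm}$ never act on $\txt{X}$ directly. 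Granting for a moment that $\mc{S}$ is invertible on $\mc{B}(\mc{H})$, the equation becomes the fixed-point relation $\txt{X}=\mc{S}^{-1}(\txt{X}\txt{V}_1\txt{X}-\txt{V}_2)=:\mc{F}(\txt{X})$, and the whole problem reduces to showing that $\mc{F}$ is a self-map and a contraction on a suitable ball. Since weak and strong solutions coincide (as quoted above), the weak fixed point produced this way is automatically the strong solution sought.

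The crucial analytic input, and the step I expect to be the main obstacle, is the invertibility of $\mc{S}$ together with the sharp bound $\|\mc{S}^{-1}\|\leq\pi/(2d)$. Invertibility is the Rosenblum theorem, valid because $\sigma(\txt{H}_{\p})\cap\sigma(\txt{H}_{\m})=\emptyset$; the delicate part is the constant. I would obtain it from the spectral theorem by representing the weak solution of $\mc{S}(\txt{X})=\txt{Y}$ as the double operator integral $\txt{X}=\iint(\lambda-\mu)^{-1}\,dE_{\m}(\lambda)\,\txt{Y}\,dE_{\p}(\mu)$, where $E_{\pm}$ are the spectral measures of $\txt{H}_{\pm}$; this is legitimate since $|\lambda-\mu|\geq d>0$ throughout the joint support, and a direct computation confirms $\txt{H}_{\m}\txt{X}-\txt{X}\txt{H}_{\p}=\txt{Y}$. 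Bounding this transformer reduces to a scalar extremal estimate for the kernel $(\lambda-\mu)^{-1}$ on $\{|\lambda-\mu|\geq d\}$, whose optimal constant is $\pi/(2d)$; this is exactly where the factor $\pi$ of the smallness assumption~(\ref{small}) is born, and I would either reproduce this extremal computation or import it from the cited Riccati literature.

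With $\|\mc{S}^{-1}\|\leq\pi/(2d)$ in hand the remaining estimates are routine. For the self-map property I impose $\|\mc{F}(\txt{X})\|\leq\tfrac{\pi}{2d}(\|\txt{V}_1\|\,\|\txt{X}\|^2+\|\txt{V}_2\|)\leq r$ on the ball $\{\|\txt{X}\|\leq r\}$, which is the quadratic inequality $\tfrac{\pi\|\txt{V}_1\|}{2d}\,r^2-r+\tfrac{\pi\|\txt{V}_2\|}{2d}\leq0$; its discriminant is nonnegative precisely when $\pi^2\|\txt{V}_1\|\,\|\txt{V}_2\|\leq d^2$, i.e. under~(\ref{small}), and the smaller root $r_{-}=\|\txt{V}_1\|^{-1}\bigl(\tfrac{d}{\pi}-\sqrt{\tfrac{d^2}{\pi^2}-\|\txt{V}_1\|\,\|\txt{V}_2\|}\bigr)$ gives a closed ball carried into itself, hence the asserted bound $\|\txt{X}_0\|\leq r_{-}$. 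For the contraction I use the identity $\txt{X}\txt{V}_1\txt{X}-\txt{Y}\txt{V}_1\txt{Y}=\txt{X}\txt{V}_1(\txt{X}-\txt{Y})+(\txt{X}-\txt{Y})\txt{V}_1\txt{Y}$ to get $\|\mc{F}(\txt{X})-\mc{F}(\txt{Y})\|\leq\tfrac{\pi}{2d}\|\txt{V}_1\|(\|\txt{X}\|+\|\txt{Y}\|)\,\|\txt{X}-\txt{Y}\|$. On $\{\|\txt{X}\|\leq r_{-}\}$ the Lipschitz constant is $1-\sqrt{1-\pi^2\|\txt{V}_1\|\,\|\txt{V}_2\|/d^2}<1$, so Banach's theorem yields a unique fixed point there; and on the larger ball $\{\|\txt{X}\|<d/(\pi\|\txt{V}_1\|)\}$ the same estimate gives Lipschitz constant below $\tfrac{\pi}{2d}\|\txt{V}_1\|\cdot\tfrac{2d}{\pi\|\txt{V}_1\|}=1$, which upgrades uniqueness to that whole ball.

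Finally, the corollary is a one-line algebraic check. The bound $\|\txt{X}_0\|\leq r_{-}$ lies below $1$ if and only if $\tfrac{d}{\pi}-\|\txt{V}_1\|<\sqrt{\tfrac{d^2}{\pi^2}-\|\txt{V}_1\|\,\|\txt{V}_2\|}$; when the left-hand side is negative this is automatic, and otherwise squaring collapses it exactly to the hypothesis~(\ref{cont}), $\|\txt{V}_1\|+\|\txt{V}_2\|<2d/\pi$. This delivers the strict contraction $\|\txt{X}_0\|<1$ and closes the argument.
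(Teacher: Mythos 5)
Your argument is correct and is essentially the proof the paper relies on: the paper itself does not prove Lemma~\ref{mt} but defers to~\cite{main}, whose argument is exactly this Banach fixed-point iteration built on the sharp bound $\|\mc{S}^{-1}\|\leq\pi/(2d)$ for the Sylvester equation with spectrally separated self-adjoint coefficients. Note also that the radius you derive, $r_{-}=\|\txt{V}_1\|^{-1}\bigl(\tfrac{d}{\pi}-\sqrt{\tfrac{d^2}{\pi^2}-\|\txt{V}_1\|\,\|\txt{V}_2\|}\bigr)$, silently corrects a typo in the stated estimate (whose prefactor reads $\|\txt{V}_2\|^{-1}$); yours is the right version, as the bound must vanish when $\txt{V}_2\to 0$.
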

An elegant and compact proof of this statement, based on the Banach fixed point theorem, can be found in~\cite{main}. 

Now, let us prove our main result. First, we show that the existence of a solution of the Riccati equation~(\ref{ricc}) implies
the existence of an operator generating a symmetry in the system~(\ref{bom}). Second, we argue that under certain conditions 
imposed on the parameters $\Delta$, $\beta$, $\omega$ this equation is weakly solvable.
\begin{thh}
\label{symm}
Let us assume that there exists a weak (and hence strong) solution $\txt{X}_0$ of the Riccati equation~(\ref{ricc}).
Then there also exists a self--adjoint involution $\bf{J}$ such that $\bf{JH}=\bf{HJ}$ where $\bf{H}$ is given 
by~(\ref{bom}). Moreover, the generator $\bf{J}$ in terms of $\txt{X}_0$ reads
\begin{equation}
\label{j}
{\bf{J}}=
\begin{bmatrix}
 \txt{J}_0-1 & \txt{J}_0\txt{X}_0^* \\
\txt{X}_0\txt{J}_0 & \txt{X}_0\txt{J}_0\txt{X}_0^*-1
\end{bmatrix},
\quad\txt{where}\quad
\txt{J}_0=2(1+\txt{X}_0^*\txt{X}_0)^{-1}.
\end{equation}
\end{thh}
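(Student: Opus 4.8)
The plan is to use the standard correspondence between solutions of the operator Riccati equation and reducing subspaces of a block operator matrix, and then to construct $\bf{J}$ as the difference of the two complementary orthogonal projections. First I would show that the graph subspace
\[
\mc{G}_{\p}:=\{(\phi,\txt{X}_0\phi)\,:\,\phi\in\mc{H}_{\txt{B}}\}\subset\mc{H}_{\txt{B}}\oplus\mc{H}_{\txt{B}}
\]
is invariant under $\bf{H}$. Applying $\bf{H}$ to a vector of $\mc{G}_{\p}$ gives $(\txt{H}_{\p}\phi+\Delta\txt{X}_0\phi,\ \Delta\phi+\txt{H}_{\m}\txt{X}_0\phi)$, and the requirement that the lower entry be $\txt{X}_0$ applied to the upper entry is exactly the strong Riccati equation~(\ref{strong}). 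The hypothesis that $\txt{X}_0$ is a strong solution (so that $Ran(\txt{X}_{0|\mc{D}_2})\subset\mc{D}_2$) is what makes this meaningful on $\mc{D}(\bf{H})=\mc{D}_2\oplus\mc{D}_2$, so that $\mc{G}_{\p}$ is genuinely, not merely formally, invariant.

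Second, I would identify the orthogonal complement. A vector $(u,v)$ is orthogonal to every $(\phi,\txt{X}_0\phi)$ precisely when $u+\txt{X}_0^*v=0$, hence
\[
\mc{G}_{\m}:=\mc{G}_{\p}^{\perp}=\{(-\txt{X}_0^*\psi,\psi)\,:\,\psi\in\mc{H}_{\txt{B}}\}.
\]
Invariance of $\mc{G}_{\m}$ under $\bf{H}$ reduces to the relation $\Delta(\txt{X}_0^*)^2+\txt{H}_{\p}\txt{X}_0^*-\txt{X}_0^*\txt{H}_{\m}-\Delta=0$, which is nothing but the adjoint of~(\ref{strong}); in the weak formulation~(\ref{weak}) it is obtained at once by taking complex conjugates and using $\txt{H}_{\pm}^*=\txt{H}_{\pm}$ together with $\Delta\in\mb{R}$. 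By the weak--strong equivalence~\cite{strong} this adjoint equation also holds strongly, so both $\mc{G}_{\p}$ and $\mc{G}_{\m}$ are invariant; that is, $\mc{G}_{\p}$ reduces the self--adjoint operator $\bf{H}$.

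Third, writing $\Pi_{\pm}$ for the orthogonal projections onto $\mc{G}_{\pm}$, I would set $\bf{J}:=\Pi_{\p}-\Pi_{\m}=2\Pi_{\p}-\mathbb{I}$. Being of this form with $\Pi_{\p}$ an orthogonal projection, $\bf{J}$ is automatically self--adjoint and satisfies $\bf{J}^2=\mathbb{I}$; and since $\mc{G}_{\p}$ reduces $\bf{H}$, the projection $\Pi_{\p}$ commutes with $\bf{H}$, whence $\bf{JH}=\bf{HJ}$. It remains only to recover the explicit form~(\ref{j}). Minimising $\|(u,v)-(\phi,\txt{X}_0\phi)\|$ over $\phi$ yields the normal equation $(1+\txt{X}_0^*\txt{X}_0)\phi=u+\txt{X}_0^*v$, so that with $\txt{J}_0=2(1+\txt{X}_0^*\txt{X}_0)^{-1}$ one obtains
\[
2\Pi_{\p}=
\begin{bmatrix}
\txt{J}_0 & \txt{J}_0\txt{X}_0^* \\
\txt{X}_0\txt{J}_0 & \txt{X}_0\txt{J}_0\txt{X}_0^*
\end{bmatrix},
\]
and subtracting $\mathbb{I}$ reproduces~(\ref{j}) term by term.

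The main obstacle is not the algebra but the operator--theoretic bookkeeping forced by the unbounded diagonal blocks $\txt{H}_{\pm}$: one must verify that the bounded projection $\Pi_{\p}$ maps $\mc{D}(\bf{H})$ into itself and that $\bf{JH}=\bf{HJ}$ holds as a genuine operator identity on $\mc{D}(\bf{H})$ rather than only in the weak sense. This is exactly where the strong--solution property $Ran(\txt{X}_{0|\mc{D}_2})\subset\mc{D}_2$ and the domain--preservation of $(1+\txt{X}_0^*\txt{X}_0)^{-1}$ are essential, ensuring that the graph subspaces are compatible with the core $\mc{D}_2\oplus\mc{D}_2$ and that every manipulation above is legitimate on the domain of $\bf{H}$.
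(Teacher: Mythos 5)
Your construction is the same as the paper's: the graph subspace $\mathcal{G}(\txt{X}_0)$ and its orthogonal complement, invariance of the graph via the strong Riccati equation and of the complement via its adjoint, the explicit projection obtained from the normal equations, and ${\bf{J}}=2{\bf{P}}_{\p}-\mathbb{I}$. The algebra is correct, including the identification of $\mathcal{G}(\txt{X}_0)^{\perp}$ and the formula for $2{\bf{P}}_{\p}$ in terms of $\txt{J}_0$.

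The one genuine gap is the step you yourself single out as the main obstacle and then dispose of by assertion: that ${\bf{P}}_{\p}$ maps $\mathcal{D}({\bf{H}})=\mathcal{D}_2\oplus\mathcal{D}_2$ into itself. The strong--solution hypothesis gives only $\txt{X}_0\mathcal{D}_2\subset\mathcal{D}_2$, while every entry of your matrix for $2\Pi_{\p}$ also involves $\txt{X}_0^*$ and $\txt{J}_0=2(1+\txt{X}_0^*\txt{X}_0)^{-1}$; the ``domain--preservation of $(1+\txt{X}_0^*\txt{X}_0)^{-1}$'' is not a hypothesis and does not follow from boundedness --- it must be proved, and the same issue already infects your second step, since the adjoint Riccati equation can only hold strongly if $\txt{X}_0^*$ maps $\mathcal{D}_2$ into $\mathcal{D}_2$. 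The paper closes this by using the weak form~(\ref{weak}) of the Riccati equation to show that for fixed $\ket{\phi}\in\mathcal{D}_2$ the functional $\psi\mapsto\langle\txt{H}_{\p}\psi,\txt{X}_0^*\phi\rangle$ is bounded on $\mathcal{D}_2$, whence $\txt{X}_0^*\ket{\phi}\in\mathcal{D}(\txt{H}_{\p}^*)=\mathcal{D}_2$; consequently $\txt{J}_0^{-1}\mathcal{D}_2\subset\mathcal{D}_2$, and since $\txt{J}_0^{-1}$ is boundedly invertible, $\txt{J}_0\mathcal{D}_2=\mathcal{D}_2$, so every block of ${\bf{P}}_{\p}$ preserves $\mathcal{D}_2$. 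Without this argument (or an equivalent one) the commutation ${\bf{HJ}}={\bf{JH}}$ is established only in the weak sense, not as an operator identity on $\mathcal{D}({\bf{H}})$; supply it and your proof coincides with the paper's.
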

\begin{proof}
Let $\mathcal{G}(\txt{X}_0)$ be the graph of $\txt{X}_0$, that is
\begin{equation}
\label{graph}
\mathcal{G}(\txt{X}_0)=
\left\{
\begin{bmatrix}
\ket{\psi}\\ 
\txt{X}_0\ket{\psi}
\end{bmatrix}
\in\mathcal{H}_{\txt{B}}\oplus\mathcal{H}_{\txt{B}}
:\ket{\psi}\in\mathcal{H}_{\txt{B}}
\right\}.
\end{equation}
$\txt{X}_0$ is a strong solution of~(\ref{ricc}) thus $\txt{X}_0\ket{\psi}\in\mathcal{D}_2$
(by definition) and $\txt{X}_0(\txt{H}_{\p}+\Delta\txt{X}_0)\ket{\psi}=(\txt{H}_{-}\txt{X}_0+\Delta)\ket{\psi}$
for $\ket{\psi}\in\mathcal{D}_2$. Therefore,
\begin{equation}
 \label{include}
  \begin{bmatrix}
  \txt{H}_{\p} & \Delta \\
   \Delta & \txt{H}_{\m} 
 \end{bmatrix}
 \begin{bmatrix}
 \ket{\psi} \\
 \txt{X}_0\ket{\psi}
 \end{bmatrix}
 =
  \begin{bmatrix}
 \left(\txt{H}_{\p}+\Delta\txt{X}_0\right)\ket{\psi} \\
 \txt{X}_0\left(\txt{H}_{\p}+\Delta\txt{X}_0\right)\ket{\psi} 
 \end{bmatrix}
 \in\mathcal{G}(\txt{X}_0),
\end{equation}
that is ${\bf{H}}(\mathcal{G}(\txt{X}_0)\cap\mathcal{D}_2)\subset\mathcal{D}_2$. Making use of the 
same arguments, one can verify that $\mathcal{G}(\txt{X}_0)^{\bot}$, which is given by
\begin{equation}
\label{gort}
\mathcal{G}(\txt{X}_0)^{\bot}=
\left\{
\begin{bmatrix}
-\txt{X}_0^{*}\ket{\psi}\\ 
\ket{\psi}
\end{bmatrix}
\in\mathcal{H}_{\txt{B}}\oplus\mathcal{H}_{\txt{B}}
:\ket{\psi}\in\mathcal{H}_{\txt{B}}
\right\},
\end{equation}
is ${\bf{H}}$--invariant as well. $\txt{X}_0$ is bounded and thus its graph forms a closed subspace of
$\mathcal{H}_{\txt{B}}\oplus\mathcal{H}_{\txt{B}}$ and hence the decomposition 
$\mathcal{H}_{\txt{B}}\oplus\mathcal{H}_{\txt{B}}=\mathcal{G}(\txt{X}_0)\oplus\mathcal{G}(\txt{X}_0)^{\bot}$
holds true. Therefore, each state $\ket{\Psi}\in\mathcal{D}(\bf{H})$ of the composite system can be uniquely decomposed
$\ket{\Psi}=\ket{\Psi_1}\oplus\ket{\Psi_2}$ where $\ket{\Psi_1}\in\mathcal{G}(\txt{X}_0)$ and $\braket{\Psi_2}{\Psi_1}=0$.

Let ${\bf{P}}_{\p}$ be a projection onto $\mathcal{G}(\txt{X}_0)$. Then it follows that ${\bf{P}}_{\p}{\bf{H}}\ket{\Psi_1}={\bf{H}}\ket{\Psi_1}$ 
and ${\bf{P}}_{\p}{\bf{H}}\ket{\Psi_2}=0$. Assuming for a moment that ${\bf{P}}_{\p}\mathcal{D}_2\subset\mathcal{D}_2$, we obtain
\begin{equation}
{\bf{H}}\left({\bf{P}}_{\p}\ket{\Psi_1}\oplus{\bf{P}}_{\p}\ket{\Psi_2}\right)= {\bf{H}}\ket{\Psi_1}
\quad\txt{and}\quad
{\bf{P}}_{\p}\left({\bf{H}}\ket{\Psi_1}\oplus{\bf{H}}\ket{\Psi_2}\right)= {\bf{H}}\ket{\Psi_1}.
\end{equation}
Therefore, ${\bf{H}}{\bf{P}}_{\p}\ket{\Psi}={\bf{P}}_{\p}{\bf{H}}\ket{\Psi}$ for all $\ket{\Psi}\in\mathcal{D}_2$.

The inverse $(1+\txt{X}_0^*\txt{X}_0)^{-1}$ exists and it is a bounded self--adjoint operator on $\mathcal{H}_{\txt{B}}$.
Thus, $\bf{P}_{\p}$ can be expressed as
\begin{equation}
\label{pp}
{\bf{P}_{\p}}=
\frac{1}{2}
\begin{bmatrix}
 \txt{J}_0 & \txt{J}_0\txt{X}_0^* \\
\txt{X}_0\txt{J}_0 & \txt{X}_0\txt{J}_0\txt{X}_0^*
\end{bmatrix}.
\end{equation} 
It is a matter of straightforward calculations to see that~(\ref{pp}) indeed projects onto $\mathcal{G}(\txt{X}_0)$.

Due to the fact that ${\bf{J}}=2{\bf{P}_{\p}}-{\bf{1}}$, the only question which we need to address to conclude the 
proof is whether ${\bf{P}_{\p}}\ket{\Psi}$ is again in $\mathcal{D}(\bf{H})$ for $\ket{\Psi}\in\mathcal{D}(\bf{H})$. 
Because $\txt{X}_0$ is a weak (and hence strong) solution of~(\ref{ricc}), we have $\txt{X}_0\mathcal{D}_2\subset\mathcal{D}_2$. 
Moreover, the function $f(\psi):=\langle\txt{H}_{\p}\psi,\txt{X}_0^*\phi\rangle$ is continuous on $\mathcal{D}_2$
for every $\ket{\phi}\in\mathcal{D}_2$. Indeed, it follows from~(\ref{weak}) that
\begin{equation}
|f(\psi)|\leq M_{\phi}\|\psi\|,
\quad\txt{where}\quad 
M_{\phi}=\alpha\|\phi\|\|\txt{X}_0\|^2+\|\txt{H}_{\m}\phi\|\|\txt{X}_0\|+\alpha\|\phi\|.
\end{equation}
As a result, $\txt{X}_0^*\ket{\phi}\in\mathcal{D}(\txt{H}_{\p}^*)=\mathcal{D}_2$, \ie{} $\txt{X}_0^*\mathcal{D}_2\subset\mathcal{D}_2$
and therefore $\txt{J}_0^{-1}\mathcal{D}_2\subset\mathcal{D}_2$.  $\txt{J}_0^{-1}$ is invertible, hence $\txt{J}_0\mathcal{D}_2=\mathcal{D}_2$. 
In summary, ${\bf{P}_{\p}}\mathcal{D}({\bf{H}})\subset\mathcal{D}({\bf{H}})$ which concludes the proof.\qedhere
\end{proof}

\begin{thh}
Let us assume that $\beta$, $\omega$, $\Delta\not=0$ satisfy the following conditions 
\begin{equation}
\label{ass}
 \frac{2\beta}{\omega}\notin\mathbb{N}\quad\txt{and}\quad \frac{\Delta}{\beta}>\frac{\pi}{2}. 
\end{equation}
Then there exists a unique weak (hence strong) solution of the Riccati equation~(\ref{ricc}) such that $\|X_0\|<1$.
As a result, there is a $\mathbb{Z}_2$ symmetry with respect to which the Rabi model is invariant. The generator of 
this symmetry is given by~(\ref{symm}).
\end{thh}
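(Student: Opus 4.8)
The plan is to reduce everything to Lemma~\ref{mt}, but only after interchanging the roles of $\Delta$ and $\beta$. As written, the coupling in~(\ref{bom}) is the bounded operator $\Delta\,\mathbb I_{\txt B}$, while the diagonal gap $\txt{dist}(\sigma(\txt H_{\p}),\sigma(\txt H_{\m}))$ never exceeds $2\beta$; feeding this into~(\ref{small}) would demand $\Delta<2\beta/\pi$, the exact opposite of~(\ref{ass}). I would therefore first conjugate $\mathbf H$ by the spin Hadamard $\mathbf U=\mathsf H\otimes\mathbb I_{\txt B}$, with $\mathsf H\sigma_x\mathsf H=\sigma_z$ and $\mathsf H\sigma_z\mathsf H=\sigma_x$. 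This replaces the diagonal entries of~(\ref{bom}) by $\tilde{\txt H}_{\pm}=\omega a^\dagger a\pm\Delta$ and the off-diagonal entry by $\beta+g^*a+ga^\dagger$. Because $\tilde{\txt H}_{\p}-\tilde{\txt H}_{\m}=2\Delta$ is a scalar, the two shifted ladders are now rigidly separated: on each excitation sector the relevant gap is exactly $2\Delta$, which is large precisely in the regime~(\ref{ass}).

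Second, I would dispose of the unbounded part $g^*a+ga^\dagger$ of the rotated coupling by the same square-completion that produced~(\ref{compact}), i.e.\ a displacement $D(\alpha)=\exp(\alpha a^\dagger-\alpha^*a)$ with $\alpha=\pm g/\omega$. The aim is a frame in which the diagonal blocks stay in the rigidly separated form $\omega a^\dagger a-|g|^2/\omega\pm\Delta$, commuting with the number operator, while the surviving off-diagonal coupling is a displacement operator times $\beta$, hence bounded of norm $\beta$ and block-diagonal in the excitation number. In such a frame the reduced problem decouples over Fock sectors into scalar Riccati equations with $\txt{H}_{\pm}^{(n)}=\omega n-|g|^2/\omega\pm\Delta$ and $\txt V_1=\txt V_2=\beta$. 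For each sector $d=2\Delta$ and $\|\txt V_1\|=\|\txt V_2\|=\beta\neq0$, so that both~(\ref{small}) and~(\ref{cont}) read $\beta<2\Delta/\pi$, that is $\Delta/\beta>\pi/2$; Lemma~\ref{mt} then yields a solution $x_n$ with $|x_n|<1$ uniformly in $n$.

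Third, collecting the sectors gives a weak (hence strong) contraction $\tilde{\txt X}_0$, $\|\tilde{\txt X}_0\|<1$, in the rotated, displaced frame. I would then undo the displacement and the Hadamard: both are unitary and carry $\mathcal G(\tilde{\txt X}_0)$ onto an $\mathbf H$-invariant closed subspace of $\mathcal H_{\txt B}\oplus\mathcal H_{\txt B}$. Here the hypothesis $2\beta/\omega\notin\mathbb N$ re-enters: it keeps the original ladders $\sigma(\txt H_{\pm})=\{\omega n-|g|^2/\omega\pm\beta\}$ disjoint, which is exactly what guarantees this subspace is a graph over the first component rather than a vertical one, so that the operator it is the graph of is the bounded solution $\txt X_0$ of~(\ref{ricc}), still a strict contraction. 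Feeding $\txt X_0$ into Theorem~\ref{symm} then produces the self-adjoint involution $\mathbf J$ of~(\ref{j}) with $\mathbf{JH}=\mathbf{HJ}$, i.e.\ the claimed $\mathbb Z_2$ symmetry.

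The hard part is the reduction in the second step. Diagonalising the large $\Delta$ term forces the genuinely unbounded field operator $g^*a+ga^\dagger$ into the off-diagonal position, whereas Lemma~\ref{mt} admits only bounded $\txt V_1,\txt V_2$. A displacement does cancel the linear field terms, but a spin-independent one merely leaves them in the coupling, while a spin-dependent one generically reintroduces unbounded contributions through commutators such as $[a^\dagger a,D(g/\omega)]$ onto the diagonal, spoiling the rigid $2\Delta$ separation. Establishing that a single unitary can simultaneously keep the diagonal blocks number-diagonal and shifted by $\pm\Delta$ and render the coupling bounded of norm $\beta$ — equivalently, that the $g$-dressing does not degrade the effective gap below $2\Delta$ — is where I expect the real weight of the proof to fall; once it is in place, the spectral arithmetic behind $\Delta/\beta>\pi/2$ is immediate.
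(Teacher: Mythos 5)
Your opening observation is sharper than you may realize: the paper's own proof takes exactly the direct route you rejected. It conjugates $\txt{H}_{\pm}$ by the Weyl operators $\txt{D}_{\pm g/\omega}$ to get $\sigma(\txt{H}_{\pm})=\{\omega n\pm\beta-|g|^{2}/\omega: n\in\mathbb{N}\}$, declares $d=2\beta$ under $2\beta/\omega\notin\mathbb{N}$, keeps $\txt{V}_1=\txt{V}_2=\Delta$ as the bounded off-diagonal data, and then simply asserts that (\ref{small}) and (\ref{cont}) are satisfied ``as long as $2\Delta>\pi\beta$'' before invoking Lemma~\ref{mt} and Theorem~\ref{symm}. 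As you computed, with gap at most $2\beta$ and coupling $\Delta$ the lemma demands $\Delta<2\beta/\pi$, i.e.\ $\Delta/\beta<2/\pi$ --- the reverse of (\ref{ass}); moreover $\inf\{|\omega(n-m)+2\beta|:n,m\in\mathbb{N}\}$ equals $\txt{dist}(2\beta,\omega\mathbb{Z})$, not $2\beta$, once $2\beta>\omega/2$. The printed condition $\Delta/\beta>\pi/2$ is precisely what one would obtain from your swapped frame (gap $2\Delta$, coupling of norm $\beta$), which strongly suggests the authors performed the very interchange you attempted --- but without justifying it. So your first step correctly diagnoses a defect in the paper's proof rather than missing its idea; what the paper's argument actually establishes is the conclusion under $2\beta/\omega\notin\mathbb{N}$ and $\Delta<\txt{dist}(2\beta,\omega\mathbb{Z})/\pi$, not under (\ref{ass}).

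Your repair, however, fails exactly where you placed your uncertainty, and the obstruction is structural, not technical. (i) To keep the diagonal blocks equal to functions of $a^{\dagger}a$ shifted by $\pm\Delta$, the conjugating unitary must be block-diagonal, $\mathrm{diag}(U_1,U_2)$, with each $U_i$ preserving the Fock grading; then the new coupling satisfies $\|U_1^{*}(\beta+g^{*}a+ga^{\dagger})U_2\psi\|=\|(\beta+g^{*}a+ga^{\dagger})U_2\psi\|$, and since $U_2$ preserves each number eigenspace this grows like $\sqrt{n}$ on high Fock states just as before: no dressing can compress $g^{*}a+ga^{\dagger}$ to something of norm $\beta$. (ii) Even granting your target frame, the claim that ``$\beta$ times a displacement'' is block-diagonal in the excitation number is false --- $\txt{D}_x\ket{n}$ has nonzero components on \emph{every} Fock state --- so the decoupling into scalar sector-wise Riccati equations does not exist; and Lemma~\ref{mt} in any case uses the distance between the \emph{full} spectra, which for $\omega a^{\dagger}a\pm\Delta$ is $\txt{dist}(2\Delta,\omega\mathbb{Z})\leq\omega/2$, not $2\Delta$, so in the large-$\Delta$ regime favoured by (\ref{ass}) the relevant gap is generically small and the arithmetic $\beta<2\Delta/\pi$ never emerges from the lemma. (iii) The one unitary dressing that genuinely bounds the coupling is the polaron one, $\mathrm{diag}(\txt{D}_{g/\omega},\txt{D}_{-g/\omega})$, which produces diagonal blocks $\omega a^{\dagger}a\pm\beta-|g|^{2}/\omega$ and off-diagonal blocks $\Delta\txt{D}_{\mp 2g/\omega}$ of norm $\Delta$ --- landing you back at the same inequality $\Delta<\txt{dist}(2\beta,\omega\mathbb{Z})/\pi$. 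Indeed, your scheme would force $\sigma({\bf{H}})$ to lie within $\beta$ of the lattice $\omega\mathbb{N}\pm\Delta-|g|^{2}/\omega$, and since (\ref{ass}) places no restriction on $g$, the deep-strong-coupling asymptotics (eigenvalues approaching $\omega n\pm\beta-|g|^{2}/\omega$ as $g\to\infty$) are incompatible with that constraint whenever $\Delta>2\beta$. In short, the roles of $\beta$ and $\Delta$ cannot be interchanged by any unitary of the kind you envisage; neither your proposal nor the paper's printed argument proves the theorem under (\ref{ass}) as stated.
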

\begin{proof}
$\txt{V}_x=\txt{i}(xa^{\dagger}-x^*a)$ is self--adjoint for $x\in\mathbb{C}$ thus the unitary Weyl operator 
$\txt{D}_x=\exp(\txt{i}\txt{V}_x)$ is well defined. Moreover, $\txt{D}_{x}^*=\txt{D}_{-x}$ and therefore
\begin{equation}
\txt{H}_{\pm}=\txt{D}_{\pm\frac{g}{\omega}}\left(\omega a^{\dagger}a\pm\beta-\frac{|g|^2}{\omega}\right)\txt{D}_{\mp\frac{g}{\omega}}.
\end{equation} 
By virtue of $a^{\dagger}a\ket{n}=n\ket{n}$ (keep in mind that $n\in\mathbb{N}$), we have
\begin{equation}
\label{spec}
\sigma(\txt{H}_{\pm})=\left\{\omega n\pm\beta-\frac{|g|^2}{\omega}:n\in\mathbb{N}\right\}
=\omega\mathbb{N}\cup\{\pm\beta\}-\frac{|g|^2}{\omega}.
\end{equation}
If $2\beta$ is not multiple of $\omega$ then the distance 
\begin{equation}
\txt{dist}(\sigma(\txt{H}_{\p}),\sigma(\txt{H}_{\m}))=
\txt{inf}\{|\omega(n-m)+2\beta|: n,m\in\mathbb{N}\}=2\beta\not=0.
\end{equation}
Therefore, the spectra $\sigma(\txt{H}_{\pm})$ are disjoint \ie, the condition~(\ref{d}) holds true. In addition, both the 
smallness assumption~(\ref{small}) and~(\ref{cont}) imposed on the off--diagonal elements are satisfied as long as 
$2\Delta>\pi\beta$. According to Lemma~\ref{mt}, there is exactly one solution of the Riccati equation~(\ref{ricc}) 
and it is a strict contraction ($\|\txt{X}_0\|<1$).

 The second statement of the theorem follows immediately from Theorem~\ref{symm}.\qedhere
\end{proof}

\section{Discussion}
We begin with the $\beta=0$ case in which the spectra~(\ref{spec}) overlap and hence the separability condition~(\ref{d}) is not
satisfied. Therefore, one cannot invoke Lemma~\ref{mt} to establish the existence of a solution to the Riccati equation~(\ref{ricc}). 
However, the spectra $\sigma(\txt{H}_{\pm})$ in that particular case are identical and $\txt{H}_{\pm}$ can be transformed one into 
another by the same bosonic parity operator that generates the symmetry $\bf{J}_0$. This is not an accidental coincidence as $\txt{P}$ 
is a solution of the Riccati equation~(\ref{ricc}). Indeed, 
\begin{equation}
\label{parity}
\txt{P}\ket{\psi}=\sum_{n=0}^{\infty}(-1)^n\xi_n\ket{n},
\quad\txt{where}\quad \xi_n=\braket{n}{\psi},
\end{equation}
%
from which it follows immediately that $\txt{P}$ is bounded and $\txt{Ran}(\txt{P}_{|\mathcal{D}_2})\subset\mathcal{D}_2$. 
Note, if $n\xi_n$ are square--summable, $\sum_n |n\xi_n|^2<\infty$, so are $(-1)^{n}n\xi_n$. In the light of~(\ref{aa}),
we obtain $\txt{P}a\txt{P}=-a$ as well as $\txt{P}a^{\dagger}\txt{P}=-a^{\dagger}$ and finally  
$\txt{P}\txt{H}_{\pm}\txt{P}=\txt{H}_{\mp}$. And because $\txt{P}$ is a self--adjoint involution, it solves the Riccati 
equation~(\ref{ricc}) as stated. 

At this point, we would like to make some remarks. First and foremost, $\txt{P}$ is not a unique solution of the Riccati 
equation~(\ref{ricc}). For instance, $-\txt{P}$ also satisfies this equation. Second, the symmetry generator ${\bf{J}}$ 
from Theorem~\ref{symm} reads $\pm{\bf{J}}_0$ when $\txt{X}_0=\pm\txt{P}$ as one may expect.

If the conditions~(\ref{ass}) are met, in particular for $\beta\not=0$, the spectra $\txt{H}_{\pm}$ are separated and the
Riccati equation~(\ref{ricc}) possesses exactly one solution $\txt{X}_0$. According to Theorem~\ref{symm}, this solution 
corresponds to a symmetry generator $\bf{J}$. The only problem is that $\txt{X}_0$ is unknown. One can attempt to simplify
the problem by putting $\txt{X}_0=\txt{Y}_{\beta}\txt{P}$, where
\begin{equation}
\label{nricc}
\alpha\txt{Y}_{\beta}\txt{P}\txt{Y}_{\beta}+\left[\txt{Y}_{\beta},\txt{H}_{\p}\right]+2\beta\txt{Y}_{\beta}-\alpha\txt{P}=0,
\end{equation}
and $\txt{H}_{\p}$ is redefined so that it reads~(\ref{compact}) for $\beta=0$. This equation becomes trivial and its 
solution reads $\txt{Y}_0=1$ when $\beta=0$. On the other hand, as long as $\beta\neq 0$, under~(\ref{ass}), the 
premises of Lemma~\ref{mt} are satisfied. Hence, a unique $\txt{Y}_{\beta}$ exists and $\|\txt{Y}_{\beta}\|\leq 1$.
Moreover, if the inverse $\txt{Y}_{\beta}^{-1}$ exists as well then
\begin{equation}
\alpha\txt{Y}_{\beta}^{-1}\txt{P}\txt{Y}_{\beta}^{-1}+\left[\txt{Y}_{\beta}^{-1},\txt{H}_{\p}\right]
+2(-\beta)\txt{Y}_{\beta}^{-1}-\alpha\txt{P}=0,
\end{equation}
and therefore $\txt{Y}_{-\beta}=\txt{Y}_{\beta}^{-1}$. Although we cannot solve~(\ref{nricc}) either, the latter equality
indicates the class which $\txt{Y}_{\beta}$ belongs to. One can also verify that the operator $\txt{Y}_{\beta}$ is not 
self--adjoint provided it is a function of $\txt{H}_{\p}$ and it cannot be anti--self--adjoint ($\txt{Y}_{\beta}^*=-\txt{Y}_{\beta}$) 

Indeed, if $\txt{H}_{\p}$ such that $\txt{Y}_{\beta}=\txt{Y}_{\beta}^*$ exists,~(\ref{nricc}) would imply the following 
separation into a self--adjoint and anti--self--adjoin part
\begin{equation}
\alpha\txt{Y}_{\beta}\txt{P}\txt{Y}_{\beta}+2\beta\txt{Y}_{\beta}-\alpha\txt{P}=0, 
\quad\txt{and}\quad
\left[\txt{Y}_{\beta},\txt{H}_{\p}\right]=0.
\end{equation}
Both these equations can be solved separately, but the solutions do not agree with each other unless $\beta=0$.
Similar arguments show that the condition $\txt{Y}_{\beta}^*=-\txt{Y}_{\beta}$ is necessary for $\txt{Y}_{\beta}=0$.
This contradicts~(\ref{nricc}) even when $\beta=0$.

Solutions of the Riccati equation~(\ref{ricc}) can also be used to obtain the eigenfunctions and corresponding 
eigenvalues of the Rabi Hamiltonian. Let us briefly discuss the idea. 

Both $\mathcal{G}(\txt{X}_0)$ and $\mathcal{G}(\txt{X}_0)^{\bot}$ are $\bf{H}$--invariant. Thus, if $\ket{\Psi}$ is 
an energy eigenstate then either $\ket{\Psi}\in\mathcal{G}(\txt{X}_0)$ or $\ket{\Psi}\in\mathcal{G}(\txt{X}_0)^{\bot}$.
Actually, we can say more than that. Let $\txt{Z}_{\p}=\txt{H}_{\p}+\Delta\txt{X}_0$ and
$\txt{Z}_{\m}=\txt{H}_{\m}-\Delta\txt{X}_0^*$ be defined on $\mathcal{D}_2$. Together with~(\ref{include}), this gives
\begin{equation}
\ket{\Psi_{\lambda}}
=\begin{bmatrix}
\ket{\psi_{\lambda}} \\
\txt{X}_0\ket{\psi_{\lambda}}
\end{bmatrix},
\quad\txt{where}\quad
\txt{Z}_{\p}\ket{\psi_{\lambda}}=\lambda\ket{\psi_{\lambda}},
\end{equation}
provided $\ket{\Psi_{\lambda}}$ is in $\mathcal{G}(\txt{X}_0)$.

Also, one can verify that all eigenstates from $\mathcal{G}(\txt{X}_0)^{\bot}$ are of the form:
\begin{equation}
\ket{\Phi_{\lambda}}
=\begin{bmatrix}
-\txt{X}_0^*\ket{\phi_{\lambda}} \\
\ket{\phi_{\lambda}}
\end{bmatrix},
\quad\txt{where}\quad
\txt{Z}_{\m}\ket{\phi_{\lambda}}=\lambda\ket{\phi_{\lambda}}.
\end{equation}
It can be proven that $\txt{Z}_{\pm}$ are self--adjoint on Hilbert spaces
$(\mathcal{H}_{\txt{B}},\langle(1+\txt{X}_0^*\txt{X}_0)\cdot,\cdot\rangle)$ and
$(\mathcal{H}_{\txt{B}},\langle(1+\txt{X}_0\txt{X}_0^*)\cdot,\cdot\rangle)$, respectively~\cite{bom}.
Moreover, $\sigma({\bf{H}})=\sigma(\txt{Z}_{\p})\cup\sigma(\txt{Z}_{\m})$ and the following similarity
relation holds
\begin{equation}
\label{pch}
{\bf{S}}^{-1}
\begin{bmatrix}
\txt{H}_{\p} & \Delta \\
\Delta & \txt{H}_{\m}
\end{bmatrix}
{\bf{S}}
=
\begin{bmatrix}
\txt{H}_{\p}+\Delta\txt{X}_0 & 0 \\
0 & \txt{H}_{\m}-\Delta\txt{X}^*_0
\end{bmatrix},
\quad\txt{where}\quad
{\bf{S}}=
\begin{bmatrix}
1 & -\txt{X}^*_0 \\
\txt{X}_0 & 1
\end{bmatrix}.
\end{equation}
The above block diagonal form of $\bf{H}$ extends the notion of the parity chains introduced in~\cite{braak}.

\section{Summary}

We have recognised a symmetry of the Rabi Hamiltonian and constructed its generator ${\bf{J}}$. Although this symmetry is
nonlocal (unlike \eg{} ${\bf{J}}_0=\sigma_z\otimes e^{\txt{i}\pi a^{\dagger}a}$), it is a self--adjoint involution.  
Therefore, it can be considered as a generalised parity of the Rabi model. Invoking physical nomenclature, the Rabi model 
is invariant with respect to this parity or it has an \emph{unbroken} $\mathbb{Z}_2$ symmetry. In literature, the latter
terminology is often used in a different (local) context where it is stated that the $\beta\not=0$ case corresponds to
a broken $\mathbb{Z}_2$ symmetry (because $[\bf{H},\bf{J}_0]\not=0$). Our aim was to generalise the local parity combined
by the parity operators of the individual subsystems: $\sigma_x$ and $e^{\txt{i}\pi a^{\dagger}a}$ to the nonlocal one for
$\beta\not=0$.  

Our results are not of purely existential character. By means of a solution to an operator Riccati type equation, we 
have derived an explicit formula for the generator $\bf{J}$ and formulated conditions (range of parameters~(\ref{ass}))
guaranteeing its existence. The question whether the generator $\bf{J}$ can exist under conditions other than~(\ref{ass})
remains open. This problem is a subject of our current intensive investigation.

At this point one should mention that usually the existence of a discrete symmetry in a quantum system is not enough by 
itself to fully understand its dynamics. Also, there is no obvious and direct guideline suggesting usefulness of symmetries
given by discrete operators, especially nonlocal ones, in construction of solutions to the equations of motion of composite
systems. However, discrete symmetries, local or not, allow the decomposition of the system Hilbert space into two subspaces with 
states having certain properties. One can then seek for the solution to the equation of motion in the individual subspaces
(and then try to combine the results to obtain a full solution). For the Rabi model, in the case of local parity, this idea
can be realised in terms of so called parity chains~\cite{braak}. The generalisation to the nonlocal case can by accomplished
by means of block diagonalization according to~(\ref{pch}). The latter formula may also serve as a good starting point for
developing new analytical approximations or numerical treatment of the eigenproblem~\cite{approx,*srwa}. 

Moreover, nonlocal discrete symmetries can help in classification and grouping of known solutions~\cite{braak}. They can 
also be used in constructing new solutions from the ones which are already known such as Juddian solutions~\cite{judd,*em} 
or so called quasi--exact solutions~\cite{qes}. Symmetries of the type presented here can also serve as a tool helping to 
verify certain conjectures concerning solutions of the Rabi model such as the celebrated Reik conjecture~\cite{reik}.  

We would like to emphasise that there is always a physical context (beyond mathematics) of studying symmetries (both local 
and nonlocal) in physics. For instance, there is a connection between symmetries of a quantum system and good quantum numbers
in that system~\cite{gqn}. Any measurement confirming conservation of such numbers confirms, at least partially, correctness
of the model (\ie, whether a given choice of the Hamiltonian properly describes the system). As `quantum phenomena do not 
occur in a Hilbert space, they occur in a laboratory'~\cite{peres}, the more symmetries to our disposal the more tests can
be performed. This ultimately verifies our understanding of quantum systems and their behaviour.

It seems that an inability to solve the Riccati equation when $\beta\not=0$ is the core reason why the symmetry~(\ref{j}) 
hasn't been recognised earlier. Although the solution of this equation exists as we have proved, it may not be expressible
by standard (well--known) operators. In that case, it is very unlikely to find the explicit form of ${\bf{J}}$ also by means
of different methods regardless of their nature. On the other hand, the Riccati equation can easily be solved in terms of 
the well-known bosonic parity when $\beta=0$. As one may expect, the corresponding generator ${\bf{J}}_0$ has been known all along.

The solvability problem of the Riccati equation can also be related to the question regarding diagonalization of the Rabi
model. In this paper, we have investigated the possibility of finding the eigenvalues and eigenvector of the Rabi Hamiltonian.
We have not offered full resolution, yet compact and exact expressions have been derived that, to some extent, simplify the 
problem. Although our analysis was mainly focused on the Rabi model, the presented scheme of diagonalization can be extended 
to general qubit--environment models.  


\newpage

\ack
This work was supported by the Polish Ministry of Science and Higher Education under project {\bf{Iuventus Plus}}, 
No. 0135/IP3/2011/71 (B. G) and NCN Grant N202 052940 (J. D)


\providecommand{\newblock}{}

\end{document}